\documentclass[12pt]{article}
\usepackage{makeidx}
\usepackage{amsfonts}
\usepackage{amssymb}
\usepackage{amsmath}
\usepackage[numbers,sort&compress]{natbib}
\usepackage[colorlinks,citecolor=blue,urlcolor=blue]{hyperref}
\usepackage{graphicx,indentfirst,tabularx}

\newtheorem{theorem}{Theorem}

\newtheorem{algorithm}[theorem]{Algorithm}

\newtheorem{lemma}[theorem]{Lemma}

\newenvironment{proof}[1][Proof]{\noindent\textbf{#1.} }{\ \rule{0.5em}{0.5em}}
\input{tcilatex}
\begin{document}

\title{A Monte Carlo Estimator for an Isolated Polynomial Zero via Contour
Integral Representations}
\author{Athanasios C. Micheas \thanks{%
Department of Statistics, University of Missouri, 146 Middlebush Hall,
Columbia, MO 65211-6100, USA, email: micheasa@missouri.edu}}
\maketitle

\begin{abstract}
We introduce a Monte Carlo approach for estimating isolated polynomial zeros
through their contour integral representations. For a simple zero enclosed
by an isolating contour, we show that the zero can be expressed exactly as
the expectation of a complex-valued random variable obtained by uniformly
sampling the contour. This yields an unbiased estimator without perturbing
the polynomial coefficients. We establish its variance and derive finite
sample probabilistic error bounds based on concentration inequalities, with
the bounds explicitly reflecting the geometry of the contour and its
separation from the zeros. The framework also provides a stochastic
formulation of contour based root counting and root isolation. Numerical
results illustrate the estimator's behavior under varying polynomial
degrees, and we further provide comparisons to existing methods.
\end{abstract}

\textbf{Keywords}: Argument Principle, Cauchy Integral, Contour Integration,
Hoeffding Error Bounds, Monte Carlo Integration, Monte Carlo Simulation,
Polynomial Roots

\textbf{MSC Classification}: Primary: 65C05, 65H04, Secondary: 30E20, 41A60.

\section{Introduction}

Let $\mathbb{C}$ denote the complex plane, and define a polynomial of degree 
$N\in \mathbb{N}^{+}=\{1,2,...\},$ over $\mathbb{C}$ by%
\begin{equation}
P_{N}(z)=a_{0}+a_{1}z+a_{2}z^{2}+...+a_{N}z^{N}=\sum%
\limits_{n=0}^{N}a_{n}z^{n},  \label{Poly1}
\end{equation}%
where the coefficients are also complex $a_{0},a_{1},a_{2},...,a_{N}\in 
\mathbb{C}.$ Let $r_{1},$ $r_{2},...,$ $r_{N}$ denote the $N$ roots of $%
P_{N}(z),$ and assume for now that the roots are simple and nonzero, i.e.,%
\begin{equation}
0<\left\vert r_{n}\right\vert <R,  \label{rootsorder}
\end{equation}%
where $\left\vert \cdot \right\vert $ denotes modulus, for some (large
enough) $R>0$, so that we may write%
\begin{equation}
P_{N}(z)=a_{N}(z-r_{1})(z-r_{2})...(z-r_{N})=a_{N}\prod\limits_{n=1}^{N}%
\left( z-r_{n}\right) ,  \label{Poly2}
\end{equation}%
for all $z\in \overline{B}(0;R):=\left\{ y\in \mathbb{M}:\left\vert
y\right\vert \leq R\right\} \subset \mathbb{C},$ the closed disc centered at
the origin $0+i0$ of radius $R>0$. The standard choice here is Cauchy's root
bound given by%
\begin{equation}
R=1+\max_{0\leq n<N}\left\vert \frac{a_{n}}{a_{N}}\right\vert ,
\label{rootsR}
\end{equation}%
such that the disc $\overline{B}(0;R)$ contains all roots of $P_{N}$.

For degrees $N\leq 4$, the roots of the general polynomial can be
represented by radicals. For $N\geq 5$, however, by the Abel--Ruffini
theorem, there is no formula by radicals that solves every polynomial of
degree $N$. The absence of a radical formula does not mean that the roots
cannot be represented analytically or computed numerically. In particular,
when $r_{i}$ is a simple root, the implicit function theorem implies
(locally in the coefficient variables), that the function $F_{i}$ defined by%
\begin{equation*}
r_{n}=F_{n}(a_{0},\ldots ,a_{N}),
\end{equation*}%
for $n=1,2,...,N$, is an analytic function. In particular, the central idea
is to combine the classical Cauchy contour representation of an isolated
simple root with randomized quadrature, yielding an unbiased Monte Carlo
estimator for an individual root. The construction is complemented by
probabilistic error bounds and a Monte Carlo formulation of the argument
principle for root counting. For some recent applications of Monte Carlo
methods see \cite{FoersterGrandePumiCybis2025}, \cite{FendRedenbach2026}, 
\cite{AlhelaliElDaliYusuf2026} and \cite{Mazo2026}.

To the best of our knowledge, the Cauchy contour representation of an
isolated polynomial zero has not previously been formulated explicitly as an
unbiased Monte Carlo estimator of the individual zero, with the randomness
introduced through contour sampling rather than through perturbation of the
polynomial coefficients. The proposed representation provides a direct
stochastic function of the polynomial coefficients%
\begin{equation}
r_{n}\approx \widehat{F}_{n}(a_{0},\ldots ,a_{N}),  \label{F_hat}
\end{equation}%
where $\widehat{F}_{n}$ is a Monte Carlo estimator of the root as a function
of the coefficients. In addition, we will obtain finite sample probabilistic
error bounds in terms of the geometry of the isolating contour. The
construction is complementary to existing polynomial root isolation and
randomized root finding algorithms; it does not replace the isolation
problem, but provides a statistically interpretable randomized approximation
once an isolating contour is available.

Specifically, \cite{Pan2024} develops randomized root finding algorithms
with nearly optimal complexity for the zeros of a univariate polynomial. The
algorithms can work either from coefficients or from a black-box evaluation
oracle and achieve nearly optimal complexity. There is, however, a technical
distinction: the algorithms in \cite{Pan2024} are Las Vegas randomized
algorithms, rather than Monte Carlo estimators in the statistical sense. A
Las Vegas algorithm uses randomness but guarantees correctness (with
randomness affecting running time), whereas the Monte Carlo algorithm
permits a small probability of an incorrect answer.

In \cite{PanZhao2017}, the authors investigate polynomial root finding
through randomized matrix and polynomial iterations. On the other hand,
there is work that studies polynomials whose coefficients are random. For
example, \cite{Shmerling2012} develops a Monte Carlo simulation algorithm
for studying the distribution of the number of zeros of random polynomials
in small intervals. For a review of recent work, along with precise
asymptotics of the variance of the number of real roots for a large class of
random polynomials see \cite{DoNguen2025} and the references therein.

While contour integration, Cauchy's formula, the residue theorem and the
argument principle or randomized polynomial root finding are well studied
areas of mathematics, in this paper, we propose the explicit combination of
these ingredients in a probabilistic framework in order to:

\begin{itemize}
\item propose an alternative construction from the standard randomized root
finding algorithms and standard Monte Carlo procedures that propagate
uncertainty in random polynomial coefficients, i.e., the polynomial is kept
deterministic and randomness is neatly introduced at the contour integration
representation of an isolated polynomial root,

\item provide a Monte Carlo representation of an isolated polynomial root as
the expectation of a contour sampling random variable,

\item obtain explicit finite sample probabilistic bounds for root estimation
based on the contour geometry and the polynomial coefficients,

\item give a Monte Carlo formulation of the argument principle for
estimating the number of roots inside a contour,

\item obtain a statistical distribution for the number of roots and the root
value inside an isolating contour, and

\item propose a Monte Carlo based algorithm for a probabilistically
controlled root isolation.
\end{itemize}

The paper proceeds as follows; in the next section we obtain a form of any
simple root of a polynomial in terms of a contour integral and utilize Monte
Carlo to obtain an unbiased estimator of $r_{n}$. We further derive
Hoeffding bounds that connect the numerical approximation error to contour
geometry, along with a combined isolation and root finding algorithm. In
Section 3, we utilize the results to perform a simulation study and
comparison with other methods. Concluding remarks are given in the last
section.

\section{Root representation by a contour integral}

Suppose that $r_{n}$ is a simple root of $P_{N},$ i.e., we have%
\begin{equation}
P_{N}(r_{n})=0,\text{ and }P_{N}^{\prime }(r_{n})\neq 0.  \label{simple-root}
\end{equation}%
Let $\Gamma _{n}$ be a positively oriented simple closed contour containing $%
r_{n}$ in its interior and no other root of $P_{N}$. Using equation (\ref%
{Poly2}), we can write 
\begin{equation}
\frac{P_{N}^{\prime }(z)}{P_{N}(z)}=\sum_{n=1}^{N}\frac{1}{z-r_{n}},
\label{logderivative}
\end{equation}%
and therefore 
\begin{equation*}
z\frac{P_{N}^{\prime }(z)}{P_{N}(z)}=\sum_{n=1}^{N}\frac{z}{z-r_{n}}.
\end{equation*}%
From the residue theorem (see \cite{conway2012functions}, Theorem 2.2, pg
108), since $r_{n}$ is simple, the residue of this function at $z=r_{n}$ is $%
r_{n}$. Because $\Gamma _{n}$ contains only $r_{n}$ and no other root, the
residue theorem gives 
\begin{equation}
r_{n}=\frac{1}{2\pi i}\oint\limits_{\Gamma _{n}}\sum_{n=1}^{N}\frac{z}{%
z-r_{n}}\,dz=\frac{1}{2\pi i}\oint\limits_{\Gamma _{n}}z\frac{P_{N}^{\prime
}(z)}{P_{N}(z)}dz.  \label{root-integral}
\end{equation}%
The latter equation is an exact representation of the root in terms of the
polynomial coefficients.

To make the representation more explicit, consider the circular contours%
\begin{equation}
\Gamma _{n}=\left\{ z=c_{n}+\rho _{n}e^{i\theta }:0\leq \theta \leq 2\pi
\right\} ,  \label{circle}
\end{equation}%
where the circle is such that it contains $r_{n}$ and no other root, for
some $\rho _{n}>0$ and $c_{n}\in \mathbb{C}$, that remain to be determined.
Suppose that $P_{N}$ has no zero on $\Gamma _{n}$. The argument principle
(see \cite{conway2012functions}, \S 3, pg 119) gives 
\begin{equation}
N(c_{n},\rho _{n})=\frac{1}{2\pi i}\oint\limits_{\Gamma _{n}}\frac{%
P_{N}^{\prime }(z)}{P_{N}(z)}\,dz,  \label{argument-principle}
\end{equation}%
where $N(c_{n},\rho _{n})$ is the number of zeros of $P_{N}$ in the open
disc 
\begin{equation*}
B(c_{n};\rho _{n}):=\left\{ y\in \mathbb{M}:\left\vert y-c_{n}\right\vert
<\rho _{n}\right\} \subset \mathbb{C},
\end{equation*}%
counted with multiplicity.

Setting%
\begin{equation}
z_{n}(\theta )=c_{n}+\rho _{n}e^{i\theta },  \label{z_ntransform}
\end{equation}%
gives 
\begin{equation*}
dz_{n}(\theta )=i\rho _{n}e^{i\theta }d\theta ,
\end{equation*}%
and therefore, the counting number of equation (\ref{argument-principle})
becomes 
\begin{equation}
N(c_{n},\rho _{n})=\frac{\rho _{n}}{2\pi }\int\limits_{0}^{2\pi }\frac{%
P_{N}^{\prime }(c_{n}+\rho _{n}e^{i\theta })}{P_{N}(c_{n}+\rho
_{n}e^{i\theta })}e^{i\theta }d\theta .  \label{theta-count}
\end{equation}%
The latter is equal to one by construction of $\Gamma _{n}$, i.e., $%
N(c_{n},\rho _{n})=1,$ since $\Gamma _{n}$ contains only the simple root $%
r_{n}$ and no other.

Similarly, substitution of (\ref{z_ntransform}) into (\ref{root-integral})
yields 
\begin{equation}
r_{n}=\int\limits_{0}^{2\pi }g_{n}(\theta )\frac{1}{2\pi }d\theta ,
\label{integral-theta}
\end{equation}%
where%
\begin{equation}
g_{n}(\theta )=\rho _{n}e^{i\theta }z_{n}(\theta )\frac{P_{N}^{\prime
}(z_{n}(\theta ))}{P_{N}(z_{n}(\theta ))}.  \label{g_n}
\end{equation}

Define the uniform random variable $\Theta \sim \func{Unif}\left( 0,2\pi
\right) .$ Then we can rewrite (\ref{integral-theta}) as the expectation%
\begin{equation*}
\mathbb{E}[g_{n}(\Theta )]=\int\limits_{0}^{2\pi }g_{n}(\theta )\frac{1}{%
2\pi }d\theta .
\end{equation*}%
Consequently, the simple root $r_{n}$ can be represented by the exact
formula 
\begin{equation}
r_{n}=\mathbb{E}[g_{n}(\Theta )]=\rho _{n}\mathbb{E}\left[ e^{i\Theta
}z_{n}(\Theta )\frac{P_{N}^{\prime }(z_{n}(\Theta ))}{P_{N}(z_{n}(\Theta ))}%
\right] .  \label{expectation}
\end{equation}%
and thus each simple isolated root is the expectation of an explicitly
computable complex-valued random variable. Therefore, the polynomial remains
fixed while randomness is introduced only through sampling of the contour
parameter.

\subsection{The Monte Carlo estimator}

Now consider a random sample of size $M$ from a $\func{Unif}\left( 0,2\pi
\right) ,$ i.e.,%
\begin{equation}
\Theta _{1},\ldots ,\Theta _{M}\overset{iid}{\sim }\func{Unif}\left( 0,2\pi
\right) .  \label{UnifSample}
\end{equation}%
and define%
\begin{equation}
Y_{n,m}:=g_{n}(\Theta _{m}),  \label{Y_nm}
\end{equation}%
for $m=1,\ldots ,M.$ The Monte Carlo estimator of $r_{n}$ is immediately
obtained by%
\begin{equation}
\widehat{r}_{n}^{(M)}=\widehat{F}_{n}^{(M)}(a_{0},\ldots ,a_{N})=\frac{1}{M}%
\sum_{m=1}^{M}Y_{n,m},  \label{mc-estimator}
\end{equation}%
or equivalently each root is approximated by 
\begin{equation}
\widehat{r}_{n}^{(M)}=\frac{\rho _{n}}{M}\sum_{m=1}^{M}e^{i\Theta
_{m}}(c_{n}+\rho _{n}e^{i\Theta _{m}})\frac{P_{N}^{\prime }(c_{n}+\rho
_{n}e^{i\Theta _{m}})}{P_{N}(c_{n}+\rho _{n}e^{i\Theta _{m}})}.
\label{explicit-estimator}
\end{equation}%
Note that the Monte Carlo estimator of equation (\ref{theta-count}) of the
roots inside the disc $B(c_{n};\rho _{n})$ is given by%
\begin{equation}
\widehat{N}_{M}(c_{n},\rho _{n})=\frac{\rho _{n}}{M}\sum_{m=1}^{M}e^{i\Theta
_{m}}\frac{P_{N}^{\prime }(c_{n}+\rho _{n}e^{i\Theta _{m}})}{%
P_{N}(c_{n}+\rho _{n}e^{i\Theta _{m}})},  \label{MCcountroot}
\end{equation}%
which is not an integer in general, and depending on machine arithmetic, it
is not even a real number; the imaginary part is close to zero, and tends to
zero as $M\rightarrow \infty $. Therefore, for practical implementations we
will take%
\begin{equation}
\widehat{N}_{M}(c_{n},\rho _{n})\approx \func{Re}\left[ \widehat{N}%
_{M}(c_{n},\rho _{n})\right] =\frac{\rho _{n}}{M}\sum_{m=1}^{M}\cos \left(
\Theta _{m}\right) \func{Re}\left[ \frac{P_{N}^{\prime }(c_{n}+\rho
_{n}e^{i\Theta _{m}})}{P_{N}(c_{n}+\rho _{n}e^{i\Theta _{m}})}\right] .
\label{realN}
\end{equation}%
In order to obtain an integer value we consider several approaches (see
Algorithm \ref{IsolationAlgorithm} below) and we discuss one in detail here.
Consider the probabilities of each possible value of $N(c_{n},\rho
_{n})=0,1,2,...,$ as follows; denote the indicator function by $I(\cdot ),$
and note that $\widehat{N}_{M}(c_{n},\rho _{n})$ is discrete random
variable, which we can be sampled using multiple samples%
\begin{equation}
\Theta _{j,1},\ldots ,\Theta _{j,M}\overset{iid}{\sim }\func{Unif}\left(
0,2\pi \right) ,  \label{uniformsamples}
\end{equation}%
for $j=1,2,...,L,$ and large integer $L>0$. Specifically, using (\ref{realN}%
), we have a sample from the distribution of $N(c_{n},\rho _{n})$ given by%
\begin{equation}
\widehat{N}_{M}^{(1)}(c_{n},\rho _{n}),\widehat{N}_{M}^{(2)}(c_{n},\rho
_{n}),...,\widehat{N}_{M}^{(L)}(c_{n},\rho _{n}),  \label{SampleN}
\end{equation}%
where%
\begin{equation*}
\widehat{N}_{M}^{(j)}(c_{n},\rho _{n})=\frac{\rho _{n}}{M}\sum_{m=1}^{M}\cos
\left( \Theta _{j,m}\right) \func{Re}\left[ \frac{P_{N}^{\prime }(c_{n}+\rho
_{n}e^{i\Theta _{j,m}})}{P_{N}(c_{n}+\rho _{n}e^{i\Theta _{j,m}})}\right] ,
\end{equation*}%
for $j=1,2,...,L.$ Then, appealing to Monte Carlo integration once again, we
can approximate the probability distribution of $N(c_{n},\rho _{n})$ using 
\begin{equation*}
\mathbb{P}(N(c_{n},\rho _{n})=n)=\mathbb{E}(I(N(c_{n},\rho _{n})=n))\approx 
\frac{1}{L}\sum_{j=1}^{L}I\left( \widehat{N}_{M}^{(j)}(c_{n},\rho
_{n})=n\right) ,
\end{equation*}%
for all $n=0,1,...,$ and therefore, we have estimators for the important
probabilities for no points in the disc $B(c_{n};\rho _{n})$ given by%
\begin{equation}
p_{0}:=\mathbb{P}(N(c_{n},\rho _{n})=0)\approx \frac{1}{L}%
\sum_{j=1}^{L}I\left( \widehat{N}_{M}^{(j)}(c_{n},\rho _{n})=0\right) ,
\label{ProbN0}
\end{equation}%
one point in the disc via%
\begin{equation}
p_{1}:=\mathbb{P}(N(c_{n},\rho _{n})=1)\approx \frac{1}{L}%
\sum_{j=1}^{L}I\left( \widehat{N}_{M}^{(j)}(c_{n},\rho _{n})=1\right) ,
\label{ProbN1}
\end{equation}%
and 2 or more points in the disc given by%
\begin{equation}
\mathbb{P}(N(c_{n},\rho _{n})\geq 2)=1-\mathbb{P}(N(c_{n},\rho _{n})=0)-%
\mathbb{P}(N(c_{n},\rho _{n})=1)=1-p_{0}-p_{1}.  \label{ProbN2}
\end{equation}%
These three probabilities can be used in order to propose and perform a
novel isolation algorithm (see Algorithm \ref{IsolationAlgorithm} below).

The process above using the samples (\ref{uniformsamples}) can also be used
to obtain samples from the distribution of the estimator of a root, given by%
\begin{equation}
\widehat{r}_{n}^{(1,M)},\widehat{r}_{n}^{(2,M)},...,\widehat{r}_{n}^{(L,M)},
\label{Samplern}
\end{equation}%
where%
\begin{equation*}
\widehat{r}_{n}^{(j,M)}=\frac{\rho _{n}}{M}\sum_{m=1}^{M}e^{i\Theta
_{j,m}}(c_{n}+\rho _{n}e^{i\Theta _{j,m}})\frac{P_{N}^{\prime }(c_{n}+\rho
_{n}e^{i\Theta _{j,m}})}{P_{N}(c_{n}+\rho _{n}e^{i\Theta _{j,m}})},
\end{equation*}%
for $j=1,2,...,L.$ Importantly, this allows us to estimate the mean and
variance of the Monte Carlo estimator of the root $r_{n}$ using%
\begin{equation*}
r_{n}\approx \frac{1}{L}\sum_{j=1}^{L}\widehat{r}_{n}^{(j,M)},
\end{equation*}%
which is expected to yield the value obtained in equation (\ref%
{explicit-estimator}), with variance given by%
\begin{equation}
\mathbb{E}\left[ \left\vert \widehat{r}_{n}^{(M)}-r_{n}\right\vert ^{2}%
\right] \approx \frac{1}{L}\sum_{j=1}^{L}\left\vert \widehat{r}_{n}^{(j,M)}-%
\frac{1}{L}\sum_{j=1}^{L}\widehat{r}_{n}^{(j,M)}\right\vert ^{2}.
\label{VarMCroot}
\end{equation}%
for all $n=1,2,...,N$.

Note that all the Monte Carlo estimators presented above are unbiased.
Indeed, since the samples (\ref{UnifSample}) are independent and identically
distributed, we can write 
\begin{equation*}
\mathbb{E}\left[ \widehat{r}_{n}^{(M)}\right] =\frac{1}{M}\sum_{m=1}^{M}%
\mathbb{E}[Y_{n,m}]=r_{n},
\end{equation*}%
and therefore, under the assumptions above, the estimator $\widehat{r}%
_{n}^{(M)}$ is unbiased. Similarly, for the number of roots, the Monte Carlo
estimator satisfies%
\begin{equation*}
\mathbb{E}\left[ \widehat{N}_{M}(c_{n},\rho _{n})\right] =N(c_{n},\rho _{n}),
\end{equation*}%
i.e., it is also unbiased, and the same arguments can be made for the
unbiased estimators of the probabilities (\ref{ProbN0}) and (\ref{ProbN1}).
These formulations involve an important distinction from Monte Carlo schemes
based on random perturbations of the coefficients; here there is no
perturbation bias, i.e., the expectation above is exactly the desired root
(and similarly for the number of roots or the probabilities).

Several benefits are immediately apparent from the Monte Carlo estimators of
equations (\ref{explicit-estimator}) and (\ref{MCcountroot}); firstly, the
calculations are naturally parallelizable because the contour evaluations
are independent. Second, unlike a purely deterministic numerical
approximation, the method naturally produces a sampling distribution for the
computed root, which can be useful for statistical uncertainty
quantification, and third, the representation may be attractive when root
computation is embedded within a larger stochastic computation, e.g.,
Metropolis-Hastings steps within a larger Gibbs sampler scheme.

\subsection{Probabilistic Error Bounds}

Define 
\begin{equation}
B_{n}:=\sup_{0\leq \theta \leq 2\pi }\left\vert g_{n}(\theta )\right\vert
=\sup_{0\leq \theta \leq 2\pi }\left\vert \rho _{n}e^{i\theta }z_{n}(\theta )%
\frac{P_{N}^{\prime }(z_{n}(\theta ))}{P_{N}(z_{n}(\theta ))}\right\vert ,
\label{B_n}
\end{equation}%
such that%
\begin{equation}
\left\vert g_{n}(\theta )\right\vert \leq B_{n},  \label{Bi-assumption}
\end{equation}%
and note that the real and imaginary parts are also bounded by the same $%
B_{n},$ i.e.,%
\begin{equation*}
\left\vert \func{Re}g_{n}(\theta )\right\vert \leq B_{n},\text{ and }%
\left\vert \func{Im}g_{n}(\theta )\right\vert \leq B_{n},
\end{equation*}%
for all $\theta \in \lbrack 0,2\pi ]$ and$\ n=1,2,...,N$. The following
gives an explicit form of the bounds $B_{n}$\ in terms of the polynomial
coefficients and geometric information about the contour.

\begin{lemma}
\label{LemmaBoundsB_n}Let 
\begin{equation*}
R_{n}:=\left\vert c_{n}\right\vert +\rho _{n},
\end{equation*}%
and suppose that the contour $\Gamma _{n}$ is separated from all roots by a
positive distance $d_{n},$ i.e.,%
\begin{equation}
\min_{z\in \Gamma _{n}}\min_{1\leq j\leq N}\left\vert z-r_{j}\right\vert
\geq d_{n}>0.  \label{d_n}
\end{equation}%
Then the upper bounds $B_{n}$ satisfy%
\begin{equation}
B_{n}\leq \frac{\rho _{n}R_{n}}{|a_{N}|d_{n}^{N}}%
\sum_{k=1}^{N}k|a_{k}|R_{n}^{k-1}.  \label{explicit-B}
\end{equation}%
for all $n=1,2,...,N.$
\end{lemma}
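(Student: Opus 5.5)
We bound each factor of $g_{n}(\theta)$ in (\ref{g_n}) separately, uniformly in $\theta$, and then take the supremum in (\ref{B_n}). Write $z=z_{n}(\theta)=c_{n}+\rho_{n}e^{i\theta}\in\Gamma_{n}$. Then $|\rho_{n}e^{i\theta}|=\rho_{n}$, and by the triangle inequality $|z|\leq |c_{n}|+\rho_{n}=R_{n}$. This gives
\begin{equation*}
|g_{n}(\theta)|\leq \rho_{n}R_{n}\,\frac{|P_{N}^{\prime}(z)|}{|P_{N}(z)|},
\end{equation*}
so it remains to bound the ratio $|P_{N}^{\prime}(z)|/|P_{N}(z)|$ on $\Gamma_{n}$.

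For the numerator we differentiate (\ref{Poly1}) termwise to get $P_{N}^{\prime}(z)=\sum_{k=1}^{N}k a_{k}z^{k-1}$. The triangle inequality together with $|z|\leq R_{n}$ then gives
\begin{equation*}
|P_{N}^{\prime}(z)|\leq \sum_{k=1}^{N}k|a_{k}|R_{n}^{k-1}.
\end{equation*}

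For the denominator we use the factored form (\ref{Poly2}), which holds on all of $\mathbb{C}$ because $P_{N}$ has exactly the $N$ roots $r_{1},\ldots,r_{N}$. Hence $|P_{N}(z)|=|a_{N}|\prod_{j=1}^{N}|z-r_{j}|$. The separation hypothesis (\ref{d_n}) gives $|z-r_{j}|\geq d_{n}$ for every $j$ and every $z\in\Gamma_{n}$, so
\begin{equation*}
|P_{N}(z)|\geq |a_{N}|d_{n}^{N}>0.
\end{equation*}
In particular $P_{N}$ does not vanish on the contour and $g_{n}$ is well defined.

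Combining the two estimates yields $|g_{n}(\theta)|\leq \frac{\rho_{n}R_{n}}{|a_{N}|d_{n}^{N}}\sum_{k=1}^{N}k|a_{k}|R_{n}^{k-1}$ for every $\theta\in[0,2\pi]$. Taking the supremum over $\theta$ gives (\ref{explicit-B}). The bound does not depend on which root is enclosed, so it holds for every $n=1,\ldots,N$. The argument is elementary and has no real obstacle. The one step that needs care is the lower bound on $|P_{N}|$, where the factorization must be used rather than the coefficient form, since a coefficient-wise lower bound would not be available.
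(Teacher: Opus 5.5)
Your proposal is correct and follows the paper's proof essentially step for step. Like the paper, you bound $|z|\le R_n$ on $\Gamma_n$, bound $|P_N'(z)|$ by the triangle inequality applied to $\sum_{k=1}^{N} k a_k z^{k-1}$, and bound $|P_N(z)|\ge |a_N| d_n^N$ from below using the factorization (\ref{Poly2}) together with the separation hypothesis (\ref{d_n}), before taking the supremum over $\theta$.
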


\begin{proof}
First note that for $z\in \Gamma _{n}$, we have%
\begin{equation*}
\left\vert z\right\vert \leq R_{n},
\end{equation*}%
and because 
\begin{equation*}
P_{N}^{\prime }(z)=\sum_{k=1}^{N}ka_{k}z^{k-1},
\end{equation*}%
after an appeal to the triangle inequality, we can write%
\begin{equation}
\left\vert P_{N}^{\prime }(z)\right\vert \leq \sum_{k=1}^{N}k\left\vert
a_{k}\right\vert R_{n}^{k-1}.  \label{Pprime-bound}
\end{equation}%
Using equations (\ref{Poly2}) and (\ref{d_n}), it follows that 
\begin{equation}
\left\vert P_{N}(z)\right\vert \geq |a_{N}|d_{n}^{N}.  \label{P-bound}
\end{equation}%
Consequently, we have 
\begin{equation*}
\left\vert g_{n}(\theta )\right\vert =\rho _{n}\left\vert z_{n}(\theta
)\right\vert \frac{|P_{N}^{\prime }(z_{n}(\theta ))|}{|P_{N}(z_{n}(\theta ))|%
}\leq \frac{\rho _{n}R_{n}}{|a_{N}|d_{n}^{N}}%
\sum_{k=1}^{N}k|a_{k}|R_{n}^{k-1}.
\end{equation*}%
and therefore (\ref{explicit-B}) is established.
\end{proof}

The following theorem presents some insight on the properties of the
constructed Monte Carlo estimators of the roots.

\begin{theorem}
\label{Thmbounds}Let $r_{n}$ be a simple root of $P_{N}$, and suppose that $%
\Gamma _{n}$ is a circular contour containing $r_{n}$ and no other root.
Suppose that (\ref{Bi-assumption}) holds with $B_{n}$ given by (\ref%
{explicit-B}), for each $n=1,2,...,N$, and for all $\theta \in \lbrack
0,2\pi ]$, and let $\widehat{r}_{n}^{(M)}$ be defined by (\ref%
{explicit-estimator}). Then for any $\varepsilon >0$ we have%
\begin{equation}
\mathbb{P}\left( \left\vert \widehat{r}_{n}^{(M)}-r_{n}\right\vert
>\varepsilon \right) \leq 4\exp \left( -\frac{M\varepsilon ^{2}}{4B_{n}^{2}}%
\right) .  \label{theorem-bound}
\end{equation}%
Consequently, if the sample size satisfies 
\begin{equation}
M\geq \frac{4B_{n}^{2}}{\varepsilon ^{2}}\log \frac{4}{\delta },
\label{M-bound}
\end{equation}%
for some $\delta >0,$ then 
\begin{equation}
\mathbb{P}\left( \left\vert \widehat{r}_{n}^{(M)}-r_{n}\right\vert
>\varepsilon \right) \leq \delta .  \label{epsilon-delta}
\end{equation}%
Under the assumptions above, if the sample sizes satisfy%
\begin{equation}
M_{n}\geq \frac{4B_{n}^{2}}{\varepsilon ^{2}}\log \frac{4N}{\delta },
\label{M_n-bound}
\end{equation}%
then we have the simultaneous bound 
\begin{equation}
\mathbb{P}\left( \max_{1\leq n\leq N}\left\vert \widehat{r}%
_{n}^{(M_{n})}-r_{n}\right\vert >\varepsilon \right) \leq \delta .
\label{all-roots}
\end{equation}
\end{theorem}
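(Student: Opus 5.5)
The plan is to split the complex error into its real and imaginary parts and apply Hoeffding's inequality to each part separately. Write $\widehat{r}_{n}^{(M)}-r_{n}=\frac{1}{M}\sum_{m=1}^{M}(Y_{n,m}-r_{n})$. By (\ref{expectation}) the $Y_{n,m}$ are i.i.d.\ with mean $r_{n}$. By (\ref{Bi-assumption}) we have $\func{Re}Y_{n,m},\func{Im}Y_{n,m}\in[-B_{n},B_{n}]$, so each lies in an interval of length $2B_{n}$. Their means are $\func{Re}r_{n}$ and $\func{Im}r_{n}$. Hoeffding's inequality for bounded i.i.d.\ real variables then gives, for any $t>0$,
\begin{equation*}
\mathbb{P}\left(\left\vert \tfrac{1}{M}\textstyle\sum_{m}\func{Re}Y_{n,m}-\func{Re}r_{n}\right\vert > t\right)\leq 2\exp\left(-\frac{2M t^{2}}{(2B_{n})^{2}}\right)=2\exp\left(-\frac{M t^{2}}{2B_{n}^{2}}\right),
\end{equation*}
and the same bound holds for the imaginary part.

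Next, since $|w|^{2}=(\func{Re}w)^{2}+(\func{Im}w)^{2}$, the event $|\widehat{r}_{n}^{(M)}-r_{n}|>\varepsilon$ forces either $|\func{Re}(\widehat{r}_{n}^{(M)}-r_{n})|>\varepsilon/\sqrt{2}$ or $|\func{Im}(\widehat{r}_{n}^{(M)}-r_{n})|>\varepsilon/\sqrt{2}$. Taking $t=\varepsilon/\sqrt{2}$ and a union bound over the two parts gives
\begin{equation*}
\mathbb{P}\left(|\widehat{r}_{n}^{(M)}-r_{n}|>\varepsilon\right)\leq 4\exp\left(-\frac{M\varepsilon^{2}}{4B_{n}^{2}}\right),
\end{equation*}
which is exactly (\ref{theorem-bound}). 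The constants match precisely, which confirms this route. The only care needed is to note that the range length is $2B_{n}$, not $B_{n}$, and to split $\varepsilon$ as $\varepsilon/\sqrt{2}$.

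For (\ref{epsilon-delta}), require $4\exp(-M\varepsilon^{2}/(4B_{n}^{2}))\leq\delta$. Solving for $M$ gives exactly (\ref{M-bound}), since the bound is monotone decreasing in $M$.

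For the simultaneous statement, apply the single-root bound to each $n$ with sample size $M_{n}$ satisfying (\ref{M_n-bound}). This gives $\mathbb{P}(|\widehat{r}_{n}^{(M_{n})}-r_{n}|>\varepsilon)\leq \delta/N$ for each $n$. The event $\max_{n}|\widehat{r}_{n}^{(M_{n})}-r_{n}|>\varepsilon$ is the union of these $N$ events, so a union bound gives (\ref{all-roots}). No independence across roots is needed.

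There is no real obstacle. The one point to state explicitly is that each $\Gamma_{n}$ must avoid all roots (the hypothesis (\ref{d_n}) of Lemma \ref{LemmaBoundsB_n}), so that $g_{n}$ is continuous and bounded on $[0,2\pi]$. That guarantees $B_{n}<\infty$, which makes Hoeffding applicable and makes the mean identity (\ref{expectation}) valid.
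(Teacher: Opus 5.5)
Your proposal is correct and follows the paper's proof essentially step for step: Hoeffding applied to the real and imaginary parts with range length $2B_{n}$, the split at $\varepsilon/\sqrt{2}$ with a union bound giving the constant $4$, inversion of the tail bound for the sample-size condition, and a union bound over the $N$ roots at level $\delta/N$ for the simultaneous statement. Your explicit remark that each $\Gamma_{n}$ must avoid all roots, so that $B_{n}<\infty$ and the mean identity holds, is a useful clarification of a hypothesis the paper leaves implicit.
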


\begin{proof}
\textbf{Individual Bounds:} Using equation (\ref{Y_nm}), define 
\begin{equation*}
X_{n,m}:=\func{Re}Y_{n,m},
\end{equation*}%
and%
\begin{equation*}
V_{n,m}:=\func{Im}Y_{n,m}.
\end{equation*}%
Then equation (\ref{Bi-assumption}) yields 
\begin{equation*}
-B_{n}\leq X_{n,m}\leq B_{n},
\end{equation*}%
such that 
\begin{equation*}
\mathbb{E}[X_{n,m}]=\func{Re}r_{n},
\end{equation*}%
and similarly for the imaginary part we have%
\begin{equation*}
\mathbb{E}[V_{n,m}]=\func{Im}r_{n}.
\end{equation*}%
Hoeffding's inequality (see \cite{hoeffding1963probability}, \cite%
{hoeffding2012collected}), for the real part gives 
\begin{equation}
\mathbb{P}\left( \left\vert \func{Re}\widehat{r}_{n}^{(M)}-\func{Re}%
r_{n}\right\vert >t\right) \leq 2\exp \left( -\frac{Mt^{2}}{2B_{n}^{2}}%
\right) ,  \label{eq:hoeffding-real}
\end{equation}%
and similarly, for the imaginary part we obtain%
\begin{equation}
\mathbb{P}\left( \left\vert \func{Im}\widehat{r}_{n}^{(M)}-\func{Im}%
r_{n}\right\vert >t\right) \leq 2\exp \left( -\frac{Mt^{2}}{2B_{n}^{2}}%
\right) .  \label{eq:hoeffding-imag}
\end{equation}%
As a result, if 
\begin{equation*}
\left\vert \widehat{r}_{n}^{(M)}-r_{n}\right\vert >\varepsilon ,
\end{equation*}%
then either 
\begin{equation*}
\left\vert \func{Re}\left( \widehat{r}_{n}^{(M)}-r_{n}\right) \right\vert >%
\frac{\varepsilon }{\sqrt{2}},
\end{equation*}%
or 
\begin{equation*}
\left\vert \func{Im}\left( \widehat{r}_{n}^{(M)}-r_{n}\right) \right\vert >%
\frac{\varepsilon }{\sqrt{2}},
\end{equation*}%
and therefore, by the union bound, we have 
\begin{equation*}
\mathbb{P}\left( \left\vert \widehat{r}_{n}^{(M)}-r_{n}\right\vert
>\varepsilon \right) \leq 4\exp \left( -\frac{M\varepsilon ^{2}}{4B_{n}^{2}}%
\right) ,
\end{equation*}%
for all $n=1,2,...,N$, and (\ref{epsilon-delta}) is established.\newline
\textbf{Simultaneous Bounds: }For each $n$, choose 
\begin{equation}
M_{n}\geq \frac{4B_{n}^{2}}{\varepsilon ^{2}}\log \frac{4N}{\delta },
\label{M_n}
\end{equation}%
such that 
\begin{equation*}
\mathbb{P}\left( \left\vert \widehat{r}_{n}^{(M_{n})}-r_{n}\right\vert
>\varepsilon \right) \leq \frac{\delta }{N}.
\end{equation*}%
Using the union bound, we have 
\begin{equation*}
\mathbb{P}\left( \max_{1\leq n\leq N}\left\vert \widehat{r}%
_{n}^{(M_{n})}-r_{n}\right\vert >\varepsilon \right) \leq \sum_{n=1}^{N}%
\mathbb{P}\left( \left\vert \widehat{r}_{n}^{(M_{n})}-r_{n}\right\vert
>\varepsilon \right) \leq \delta ,
\end{equation*}%
and (\ref{all-roots}) is established, or in other words, with probability at
least $1-\delta $, the root estimators satisfy%
\begin{equation*}
\left\vert \widehat{r}_{n}^{(M_{n})}-r_{n}\right\vert \leq \varepsilon ,
\end{equation*}%
for all $n=1,\ldots ,N.$
\end{proof}

Substitution of (\ref{explicit-B}) into (\ref{M-bound}) gives the completely
explicit sample size requirement 
\begin{equation}
M\geq \frac{4}{\varepsilon ^{2}}\left[ \frac{\rho _{n}R_{n}}{|a_{N}|d_{n}^{N}%
}\sum_{k=1}^{N}k|a_{k}|R_{n}^{k-1}\right] ^{2}\log \frac{4}{\delta },
\label{explicit-M}
\end{equation}%
and thus, once an isolating contour and its separation from the roots are
known, the Monte Carlo sample size is explicitly controlled.

The Hoeffding bounds obtained in Theorem \ref{Thmbounds}, are convenient
because they require only boundedness of the function $g_{n}(\theta )$.
Alternatively, define the variance of $g_{n}(\Theta )$ by%
\begin{equation*}
\sigma _{n}^{2}:=\mathbb{E}\left[ \left\vert g_{n}(\Theta )-r_{n}\right\vert
^{2}\right] .
\end{equation*}

Since the samples are independent, the variance of the Monte Carlo estimator
(\ref{mc-estimator}) is given by%
\begin{equation*}
\mathbb{E}\left[ \left\vert \widehat{r}_{n}^{(M)}-r_{n}\right\vert ^{2}%
\right] =\frac{\sigma _{n}^{2}}{M},
\end{equation*}%
and consequently, a straightforward application of Markov's inequality
yields 
\begin{equation*}
P\left( \left\vert \widehat{r}_{n}^{(M)}-r_{n}\right\vert >\varepsilon
\right) \leq \frac{\sigma _{n}^{2}}{M\varepsilon ^{2}}.
\end{equation*}

Hence the sufficient condition 
\begin{equation}
M\geq \frac{\sigma _{n}^{2}}{\delta \varepsilon ^{2}},  \label{bound1}
\end{equation}%
also guarantees%
\begin{equation*}
\mathbb{P}\left( \left\vert \widehat{r}_{n}^{(M)}-r_{n}\right\vert
>\varepsilon \right) \leq \delta .
\end{equation*}%
In view of equation (\ref{VarMCroot}) we can obtain the Monte Carlo
estimator of $\sigma _{n}^{2}$ as%
\begin{equation}
\widehat{\sigma }_{n}^{2}=\frac{M}{L}\sum_{j=1}^{L}\left\vert \widehat{r}%
_{n}^{(j,M)}-\frac{1}{L}\sum_{j=1}^{L}\widehat{r}_{n}^{(j,M)}\right\vert
^{2}.  \label{MCvariance}
\end{equation}

In practice, Bernstein type inequalities or variance reduced Monte Carlo
methods can potentially improve the dependence on $\delta $ when the
variance is much smaller than the worst case bound $B_{n}^{2}$.

Furthermore, since the sample complexity depends on $B_{n}^{2}$, where the
bound for $B_{n}$ contains the term $d_{n}^{-N},$ when the contour
approaches a root or when roots become closely clustered, the theoretical
bound (\ref{M-bound}) can become extremely large. In particular, the
resulting sample size estimate%
\begin{equation*}
M=O\left( \frac{B_{n}^{2}}{\varepsilon ^{2}}\log \frac{1}{\delta }\right) ,
\end{equation*}%
should therefore be interpreted as a rigorous, worst case bound rather than
as a prediction of practical performance.

\subsection{Choice of circles and root isolation}

The results in the preceding sections assume that isolating contours are
available, and therefore the proposed root (and number of roots)
approximation depends on the choice of the circle centers $c_{n}\in \mathbb{C%
}$ and radii $\rho _{n}>0,$ for all $n=1,\ldots ,N.$ This assumption should
not be confused with the problem of finding the roots themselves; however,
if no isolating contours are known, an additional root isolation procedure
is required.

This is a central problem of numerical algebra where we need to determine
the locations of all the zeros and, in particular, isolate them, i.e., each
disc contains exactly one zero, or when roots are too close to separate
reliably in order to obtain a well separated cluster of zeros. The
fundamental idea here is to first determine how many zeros lie in a disc,
and then subdivide the disc until it contains exactly one zero, or until a
well separated root cluster has been identified. A general strategy is to
use the argument principle to count the number of roots inside candidate
regions.

This point of view is closely associated with the classic contour methods of 
\cite{DelvesLyness1967} and \cite{LynessDelves1967}, and has subsequently
been incorporated into modern subdivision and certified polynomial root
finding algorithms. There are different methods that use classic results
such as Pellet's theorem (see \cite{Pellet1881}), which is especially useful
for root isolation because it can provide information about both a disc and
a surrounding root free annulus.

A modification of the R function \textit{polyroot} can be used to quickly
assess if a disc contains a root (this R function implements the Jenkins and
Traub method, see \cite{JenkinsTraub1972Algorithm419}). A recent
implementation combining such methods in the literature, that provides
pseudo-code for all operations, can be found in \cite{Becker2018NearOptimal}
(in particular, see the $\mathbb{C}$ISOLATE algorithm). The root isolation
problem can be delegated to these certified algorithms based on subdivision,
Pellet-type tests, and argument principle methods (libraries exist written
in C++, e.g., MPSolve). In addition, libraries that return pairwise disjoint
complex intervals, each containing exactly one root (with multiplicities),
like SageMath, require strong assumptions on the coefficients, e.g., the
coefficient support is integer, or rational, or Gaussian-rational, or
algebraic, i.e., not arbitrary complex coefficients.

After an extensive search, the reader will be surprised to find out that
there is no R package to perform Pellet-type isolation for polynomial roots.
Therefore, we have implemented both Pellet's method for counting roots in
discs, as well as a novel probabilistic subdivision method, using R code.
The general isolation strategy using discs is summarized in the following
algorithm, where we used \textit{R code }and \textit{RCPP} for its
implementation, and the Monte Carlo estimators of equations (\ref%
{explicit-estimator}), (\ref{MCcountroot}), (\ref{ProbN0}) and (\ref{ProbN1}%
). The R and C++ code for this implementation along with all reproducible
simulations presented below, are included in the supplementary material. We
call the new method MCIsolate.

\begin{algorithm}[Root Isolation and Recovery]
\label{IsolationAlgorithm}Start with a disc 
\begin{equation*}
B_{0}=B(c_{0},\rho _{0})
\end{equation*}%
known to contain all roots. In view of equation (\ref{rootsR}), we choose $%
\rho _{0}=R.$ Obtain the uniform sample of equation (\ref{UnifSample}) for a
large $M$, and the samples (\ref{uniformsamples}) for large $L$. We do this
only once in the beginning of the isolation process.\newline
For every current disc $B(c,\rho )$ do the following:
\end{algorithm}

\begin{enumerate}
\item In order to find the number%
\begin{equation*}
K=N(c,\rho ),
\end{equation*}%
we obtain a $100(1-a)\%$ confidence interval for $K$ as follows; we take the
sample (\ref{SampleN}), calculate%
\begin{equation*}
K_{j}=\widehat{N}_{M}^{(j)}(c_{n},\rho _{n}),
\end{equation*}%
for $j=1,2,...,L,$ and order it in ascending order to obtain the percentiles
of the distribution of $\widehat{N}_{M}(c_{n},\rho _{n})$ given by%
\begin{equation*}
K_{(1)},K_{(2)},...,K_{(L)}.
\end{equation*}%
Then the interval%
\begin{equation}
\Delta _{L}=\left[ K_{\left( \left[ \frac{aL}{2}\right] \right) },K_{\left( %
\left[ \left( 1-\frac{a}{2}\right) L\right] \right) }\right] ,  \label{CI}
\end{equation}%
contains approximately $100(1-a)\%$ of the values of $K$. If we identify
more than one integers $\widehat{K}$ in $\Delta _{L},$ increase $\alpha $
(i.e., decrease the confidence) to make the length of the interval $\Delta
_{L}$ smaller, and repeat the process, until we obtain only a single $%
\widehat{K}$ in $\Delta _{L}.$ This approach guarantees a statistically
certified root counter, however, it is not as fast as other, less accurate
approaches.\newline
Alternatively, we can choose the value of the integer $K$ that corresponds
to the largest probability%
\begin{equation}
\mathbb{P}(K=0),\text{ }\mathbb{P}(K=1),\text{ and }\mathbb{P}(K\geq 2),
\label{Probs}
\end{equation}%
provided that there is a clear suggestion, e.g., $\mathbb{P}(K=1)\approx 1$,
then there is only one root.\newline
A fast version of the algorithm uses equation (\ref{realN}) where we take
the real part and round to the closest integer, such that%
\begin{equation}
\widehat{K}\approx \text{nearest integer to }\func{Re}\left[ \widehat{N}%
_{M}(c_{n},\rho _{n})\right] .  \label{Khat}
\end{equation}%
\newline
Now once $\widehat{K}$ has been obtained, we can perform the isolation test.

\item If $\widehat{K}=0$, discard the disc since it contains no roots inside
it.

\item If $\widehat{K}=1$, retain the disc as an isolating disc.

\item If $\widehat{K}\geq 2$, subdivide the disc into smaller regions and
repeat.\newline
The subdivision can be accomplished as follows. Suppose 
\begin{equation*}
B=B(c,r),
\end{equation*}%
contains several roots. We replace $B$ by smaller regions and apply the root
counting procedure (step 1) to those regions. The precise subdivision
geometry is not essential, however the important requirement is that every
possible root contained in the parent region must be contained in at least
one child region. One simple strategy is to cover the disc $B$ by several
smaller discs. For example, one can choose displaced centers such as 
\begin{equation*}
c+\frac{r}{2},\text{ }c-\frac{r}{2},\text{ }c+i\frac{r}{2},\text{ and }c-i%
\frac{r}{2}.
\end{equation*}%
However, care must be taken to ensure that the children cover the entire
parent region.\newline
A particularly convenient implementation is to use a square subdivision.
First enclose the current disc in a square, and then split that square into
four equal subsquares and associate a disc with each subsquare, for example
by taking its center as the disc center and half its diagonal as the disc
radius.\newline
More precisely, suppose the current region is the disc 
\begin{equation*}
\overline{B}(c,r)=\{z\in \mathbb{C}:\left\vert z-c\right\vert \leq r\},
\end{equation*}%
where $c\in \mathbb{C}$ and $r>0$. We first enclose $B(c,r)$ in the square 
\begin{equation*}
S(c,r)=[\func{Re}(c)-r,\func{Re}(c)+r]\times \lbrack \func{Im}(c)-r,\func{Im}%
(c)+r],
\end{equation*}%
which has side length $2r$. Dividing it into four equal subsquares produces
four squares of side length $r$. The centers of these four subsquares are 
\begin{equation*}
k_{1}=c+\frac{r}{2}+i\frac{r}{2},
\end{equation*}%
\begin{equation*}
k_{2}=c-\frac{r}{2}+i\frac{r}{2},
\end{equation*}%
\begin{equation*}
k_{3}=c+\frac{r}{2}-i\frac{r}{2},
\end{equation*}%
and 
\begin{equation*}
k_{4}=c-\frac{r}{2}-i\frac{r}{2}.
\end{equation*}%
Then the diagonal of each subsquare has length $r\sqrt{2}.$ Therefore, a
disc centered at the center of a subsquare and having radius equal to half
of its diagonal contains the entire subsquare. Hence the child radius is 
\begin{equation*}
r_{child}=\frac{r\sqrt{2}}{2}=\frac{r}{\sqrt{2}},
\end{equation*}%
and thus the four child discs are $B\left( k_{j},\frac{r}{\sqrt{2}}\right) ,$
for $j=1,2,3,4.$ Therefore, after $k$ subdivision levels, the radius is
given by%
\begin{equation*}
r_{k}=\frac{r}{(\sqrt{2})^{k}}=\frac{r}{2^{k/2}}.
\end{equation*}%
As a result, the regions become progressively smaller, allowing the
recursive algorithm to continue until the desired isolation or approximation
tolerance is achieved.\newline
Now since 
\begin{equation*}
B(c,r)\subseteq S(c,r),
\end{equation*}%
and the four subsquares form a partition of $S(c,r)$, every point of the
parent disc lies in at least one of the four subsquares. Furthermore, each
subsquare is contained in its corresponding child disc, and therefore we
have 
\begin{equation}
B(c,r)\subseteq \bigcup_{j=1}^{4}B\left( c_{j},\frac{r}{\sqrt{2}}\right) .
\label{balldivision}
\end{equation}%
Consequently, every root contained in the parent disc is contained in at
least one child disc. The child discs do not need to be disjoint; they may
also extend outside the original parent disc. Neither fact causes a problem
for the subdivision algorithm, since the essential requirement is that the
union of the children covers the parent region. We summarize the steps of
the isolation algorithm as follows:

\begin{enumerate}
\item Start with a parent disc $B(c,r)$.

\item Compute the number of roots contained in the disc.

\item If the root count is zero, discard the disc.

\item If the disc is sufficiently small and contains only one root, retain
it as a root isolating region, and use $r/2$ as its radius.

\item Otherwise, construct the four child discs and apply the root counting
procedure to each child disc.

\item Recursively subdivide every child that contains roots.\newline
Note that the subdivision process proposed here leads always to an isolating
disc for each root, unlike Pellet-type test methods that may be unresolved.
\end{enumerate}

\item Eventually we will obtain discs $B(c_{n},\rho _{n}),$ such that 
\begin{equation*}
B(c_{n},\rho _{n})\cap \{r_{1},\ldots ,r_{N}\}=\{r_{n}\},
\end{equation*}%
for each $n=1,\ldots ,N.$ These are isolating discs for the individual
roots, and can be used in equation (\ref{explicit-estimator}) to obtain the
Monte Carlo estimators $\widehat{r}_{n}^{(M)}$ of the roots $r_{n}$. As is
the usual approach in such algorithms, we also apply Newton's algorithm in
order to obtain a more refined approximation to the root based on the
initial estimator $\widehat{r}_{n}^{(M)}$ (using a maximum of up to 20
iterations and a tolerance of $10^{-14}$). A further refinement is applied
to the isolating disc radii by taking all of them to be half the minimum
distance between identified roots, i.e.,%
\begin{equation}
\rho _{n}=0.5\underset{k<j}{\min }\left\vert \widehat{r}_{k}^{(M)}-\widehat{r%
}_{j}^{(M)}\right\vert ,  \label{isoldist}
\end{equation}%
for all $n=1,\ldots ,N.$
\end{enumerate}

The advantage of this approach is that the number of roots or their values
need not be known in advance. Because for a contour $\Gamma $, the contour
integral%
\begin{equation*}
n_{\Gamma }:=\frac{1}{2\pi i}\oint\limits_{\Gamma }\frac{P_{N}^{\prime }(z)}{%
P_{N}(z)}\,dz
\end{equation*}%
equals the number of zeros of $P_{N}$ inside $\Gamma $, counted with
multiplicity (provided $P_{N}$ has no zero on $\Gamma )$, the aforementioned
recursive subdivision of the complex plane can be used to locate regions
containing exactly one root. Once such a region has been found, the Monte
Carlo estimator of equation (\ref{explicit-estimator}) can be applied.

\subsection{Roots with Multiplicity}

The restriction to simple roots is mathematically natural, however, we
discuss what happens for a root $r$ of multiplicity $m$. In that case, the
residue is%
\begin{equation*}
\func{Res}_{z=r}\frac{P_{N}^{\prime }(z)}{P_{N}(z)}=m,
\end{equation*}%
and hence, over an isolating contour $\Gamma $ that does not contain $r,$ we
can write%
\begin{equation*}
\frac{1}{2\pi i}\oint\limits_{\Gamma }\frac{P_{N}^{\prime }(z)}{P_{N}(z)}%
dz=m.
\end{equation*}

Similarly, the contour integral representation of the root is%
\begin{equation*}
\frac{1}{2\pi i}\oint\limits_{\Gamma }z\frac{P_{N}^{\prime }(z)}{P_{N}(z)}%
dz=mr,
\end{equation*}%
and therefore, the contour integral representation of the root assumes the
form%
\begin{equation}
r=\frac{\frac{1}{2\pi i}\oint\limits_{\Gamma }zP_{N}^{\prime }(z)/P_{N}(z)dz%
}{\frac{1}{2\pi i}\oint\limits_{\Gamma }P_{N}^{\prime }(z)/P_{N}(z)dz}=\frac{%
\oint\limits_{\Gamma }zP_{N}^{\prime }(z)/P_{N}(z)dz}{\oint\limits_{\Gamma
}P_{N}^{\prime }(z)/P_{N}(z)dz}.  \label{multiple-root}
\end{equation}

This suggests a natural extension to roots with multiplicity larger than 1,
however, the ratio of two Monte Carlo estimators is generally not unbiased.
Whilst the primary estimator is an unbiased estimator for an isolated simple
zero, multiple root estimation requires additional treatment and is outside
the main unbiased framework. Nevertheless, the numerator and denominator of (%
\ref{multiple-root}) are easily computed using their Monte Carlo estimators,
and a sample from the distribution of $r$ can be obtained. Then any type of
estimators and confidence intervals can be derived numerically as desired.
For this introductory paper, in the simulations that follow we will restrict
to simple roots.

\subsection{Method sensitivity and potential problems}

All isolation methods in the literature show similar sensitivities and
limitations (see \cite{Becker2018NearOptimal}, Section 1.2, for an excellent
discussion and references to related literature), and the same is true for
the proposed Monte Carlo isolation method MCIsolate. We summarize some of
these potential problems next, and how they can be addressed. The method
proposed is theoretically sensitive to: contours close to polynomial roots,
closely clustered roots, high polynomial degree, and poorly chosen or weakly
separated isolating contours.

In particular, the MCIsolate method heavily depends on having an isolating
contour already available for the cleanest root estimation result. The
proposed subdivision algorithm addresses this issue, but its finite sample
reliability depends on the accuracy of the Monte Carlo root counts; we
solved this issue via probabilistic approaches which is one of the novelties
of MCIsolate (see Algorithm \ref{IsolationAlgorithm}, equations (\ref{CI})
and (\ref{Probs})).

Other limitations include that the main unbiased estimator applies directly
only to simple roots; multiple root estimation uses a ratio of estimators
which is generally biased; the worst case concentration bounds may be
extremely conservative; numerical evaluation becomes unstable when $P_{N}(z)$
is small on the contour; and the contour points must avoid roots, since $%
P_{N}^{\prime }(z)/P_{N}(z)$ is undefined on a contour passing through a
root. The MCIsolate algorithm alleviates these issues by construction, and
borrows strength from the Monte Carlo framework, allowing us to obtain a
distribution of these important quantities, that can then be used to
quantify uncertainty in a probabilistic framework.

Finally, depending on the tolerance $\tau $\ used to identify zeros, e.g., a
point $z$ is a zero if $\left\vert P_{N}(z)\right\vert <\tau ,$ all such
algorithms may miss a root or add additional roots in the final set of
roots. It is a common approach to try and combine similar roots into one,
depending on the tolerance given, and we have implemented this post
processing approach as well (see the R and RCPP code in the supplementary
files), once we obtain a collection of valid roots.

\section{Simulation study and comparisons}

We illustrate the fast MCIsolate Algorithm \ref{IsolationAlgorithm} in
different scenarios, and we will use a tolerance of $\tau =10^{-14}$ in
order to identify a valid root. First we consider the easier case of a
polynomial of the 4rth degree with real coefficients (there are more methods
in the literature for this special case) given by%
\begin{equation*}
P_{4}(z)=z^{4}+6z+3,
\end{equation*}%
with coefficients%
\begin{equation*}
a_{0}=3,a_{1}=6,a_{2}=0,a_{3}=0,\text{ and }a_{4}=1,
\end{equation*}%
and 4 roots 
\begin{eqnarray*}
&&1.0579194-1.592819i,1.0579194+1.592819i, \\
&&-0.5113996+0.000000i,\text{ and }-1.6044392+0.000000i,
\end{eqnarray*}%
that are clearly separated. The Cauchy bound is $R=7$, so that all\ 4 roots
are in the disc $\overline{B}(0;7).$

We run MCIsolate and visualize the results in Figure \ref{ex1}. The left
plot illustrates the first subdivision level, where we can clearly see how
the discs of equation (\ref{balldivision}) are able to cover the the
original disc $\overline{B}(0;7).$ The middle plot presents the true roots
as red $\times $s and the MCIsolate proposed roots using squares, where we
can clearly see that the roots are recovered exactly. We further display the
isolating discs using the radius of equation (\ref{isoldist}), with $\rho
_{n}=0.5465198$, for all $n=1,2,3,4.$ The plot on the right presents the
distribution of the number of roots ($L=5000$), which is clearly Gaussian,
with a Shapiro-Wilk normality test statistic having value $W=0.99967$, and a 
$p$-value of $0.6081$.

\begin{figure}[t]
\centering \includegraphics[width=0.995\textwidth]{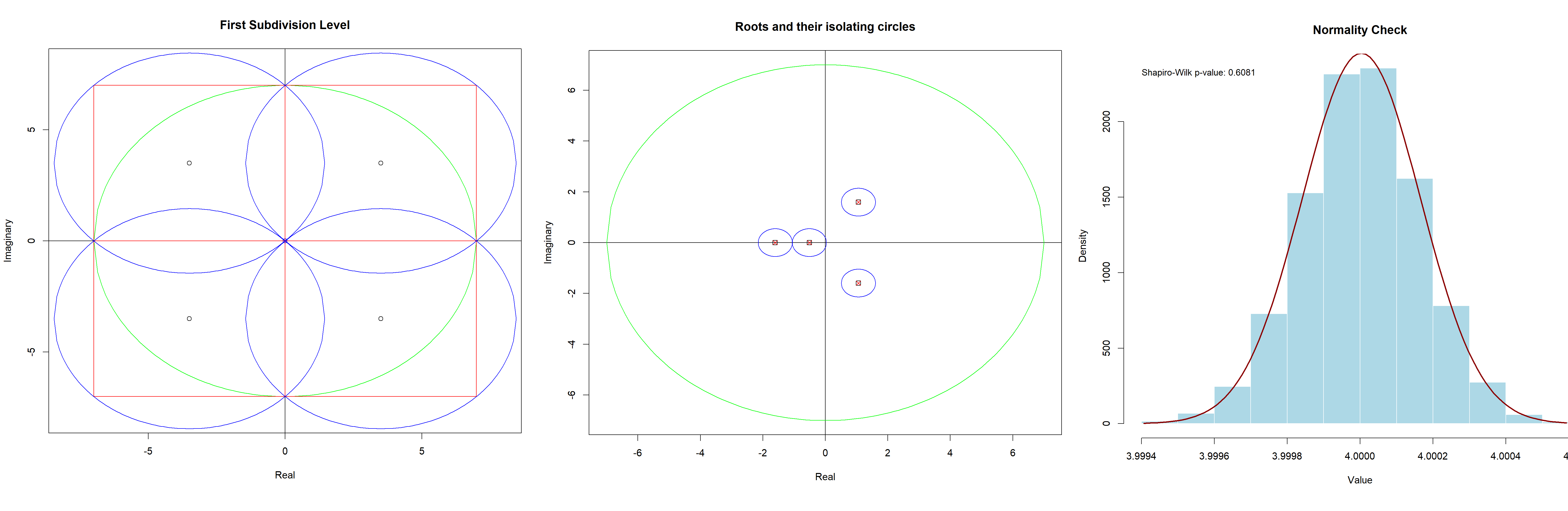}
\caption{Left: First subdivision level squares and discs. The Cauchy bound
disc is displayed as a green circle, the subdivision squares in red ink and
the subdivision discs in cyan ink. Middle: Displaying the true roots as red $%
\times $s and the MCIsolate proposed roots using squares. The isolating
discs are also displayed, with radii $\protect\rho _{n}=0.5465198$, for all $%
n=1,2,3,4.$ Right: Displaying the distribution of the number of roots, which
is clearly Gaussian.}
\label{ex1}
\end{figure}

\begin{figure}[th]
\centering \includegraphics[width=0.995\textwidth]{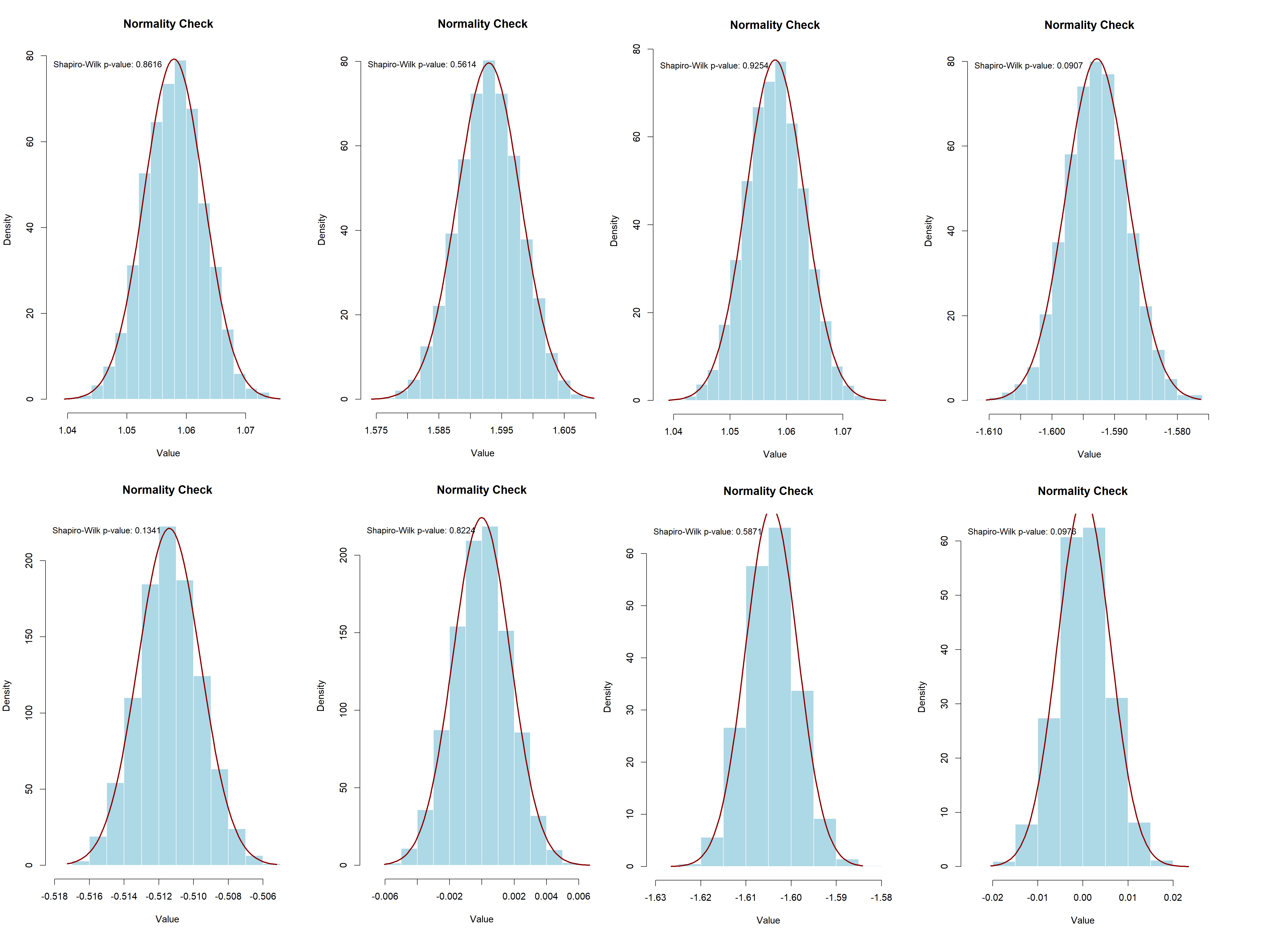}
\caption{Each pair of plots (from left to right) correspond to the
distributions of the real and imaginary parts of each of the four identified
roots. Notice that all root distributions are Gaussian.}
\label{ex1distrs}
\end{figure}

In Figure \ref{ex1distrs}, we present the distributions of the real and
imaginary parts of the four roots identified by the MCIsolate algorithm ($%
L=5000$). Note that in all cases, the distributions are Gaussian, with
approximate means the true values of the real and imaginary parts of the
roots. The Shapiro-Wilk test statistics and $p$-values are given by: for the
root $1.057924+1.592947i$ we have for the real part distribution $W_{\func{%
real}}=0.99976$, $p_{\func{real}}=0.8616,$ and for the imaginary part
distribution $W_{imag}=0.99965$, and $p_{imag}=0.5614$; for $%
1.057979-1.592849i$ we have $W_{\func{real}}=0.99979$, $p_{\func{real}%
}=0.9254$, and $W_{imag}=0.99939$, $p_{imag}=0.09071;$ for $%
-0.5114017-0.0000149i$ we have $W_{\func{real}}=0.99944$, $p_{\func{real}%
}=0.1341,$ and $W_{imag}=0.99974$, and $p_{imag}=0.8224$; and for $%
-1.604332+0.000214i$, we have $W_{\func{real}}=0.99966$, $p_{\func{real}%
}=0.5871,$ $W_{imag}=0.9994$, and $p_{imag}=0.09761.$

Next we compare MCIsolate with CIsolate of \cite{Becker2018NearOptimal},
which uses the Pellet/Rouch\'{e} test (see \cite{Pellet1881}, \cite%
{Marden1966}, \cite{Rump2003}, \cite{Melman2013}, \cite%
{BiniNoferiniSharify2013}). We use MCIsolate to obtain the Monte Carlo
estimator (using $M=50000,$ $L=5000$) for the root counts over the large
disc using Cauchy radius, for polynomials of degrees $5,10,15,...,100$, with
randomly selected complex coefficients, with $Unif(-10,10)$ independent
realizations for both the real and imaginary parts.

The results are summarized in Table \ref{Tablecomps}. Note that both methods
recover the true number of roots in every case, however, depending on the
tolerance value, it is possible that a method will report additional roots
that are near another root (3rd column). No post processing was applied for
both cases (e.g., combine nearby roots if within a certain threshold) and as
a result, we received the massive number of roots reported for the CIsolate
method. Even without post processing, the new method appears more robust to
this issue.

We also present the run times, where we note that as the polynomial degree
increases the MCIsolate times increase slower than the CIsolate run times.
Finally, note that after extensive simulations, we found that both methods
suffer from the machine arithmetic, and can produce NAs and Inf values in
the R and RCPP code, specifically in the simultaneous Horner evaluation of $%
P_{N}^{\prime }(z)/P_{N}(z)$ near a root, which is not unusual.

\newcolumntype{b}{>{\hsize=.1\hsize}X} \newcolumntype{e}{>{\hsize=.12%
\hsize}X} \newcolumntype{s}{>{\hsize=.34\hsize}X} 
\begin{table}[tbp]
\centering{\ 
\begin{tabularx}{\textwidth}{|b|e|s|s|}
\hline
$Degree$ & Cauchy $R$&MCIsolate vs CIsolate \# of reported roots& MCIsolate vs CIsolate Run Time  \\ \hline
5 & 2.389113 & 5 vs 2772& 1.68 vs 0.11 seconds \\\hline
10 & 2.199693& 10 vs 15925 & 5.31 vs 1.83 seconds \\\hline
15 & 2.11269& 15 vs 45227& 10.78 vs 11.59 seconds \\\hline
20 & 2.526373& 21 vs 178993& 30.07 vs 97.59 seconds \\\hline
25 & 3.953261& 28 vs 871243& 126.8 vs 814.4 seconds \\\hline
30 & 2.044319& 36 vs 389523& 102.33 vs 504.78 seconds \\\hline
35 & 5.6011& 79 vs 4913613& 633.22 vs 10816.08 seconds \\\hline
40 & 2.532048& 64 vs 1537036& 250.48 vs 3799.02 seconds \\\hline
45 & 2.612729& 101 vs 2424015& 422.66 vs 8733.15 seconds \\\hline
50 & 2.261706& 115 vs 2413228& 410.2 vs 11248.86 seconds \\\hline
\end{tabularx}
}
\caption{Comparisons between the MCIsolate and the CIsolate algorithms for
polynomials with degrees $N=5,10,...,50$. See the text for details and
discussion.}
\label{Tablecomps}
\end{table}

The computational cost of Monte Carlo sampling can be significant when a
large $M$ is required, which may be a computational disadvantage when
compared with the time required by deterministic methods for conventional
polynomial root finding. However, the proposed Monte Carlo formulation
provides unique features that deterministic methods are not able to provide,
including an unbiased estimator for the root or number of roots, explicit
sampling variability, finite sample error bounds, straightforward
parallelization, a stochastic formulation of contour root counting, and a
natural framework for uncertainty quantification. These features give the
MCIsolate method a distinct identity, that is unique when compared with
existing methods in the literature.

\section{Concluding Remarks}

The construction herein should be viewed as a Monte Carlo contour
integration estimator for an isolated root, rather than as a replacement for
all existing randomized polynomial root isolation algorithms. Our main
contribution is that the CMIsolate algorithm provides an explicit
statistical Monte Carlo formulation of the contour integral representation
of an isolated root, and in addition, the resulting unbiased estimator.

In particular, the principal reported results are theoretical where we
presented an exact expectation representation for every isolated simple
root; an unbiased sample average estimator; an unbiased Monte Carlo
estimator and formulation of the argument principle for estimating the
number of roots inside a contour; explicit variance and Hoeffding error
bounds based on the contour geometry and the polynomial coefficients;
simultaneous confidence guarantees for all roots; a geometry dependent bound
on the estimator magnitude; a recursive Monte Carlo root isolation strategy,
and an extension formula for roots with multiplicity. In terms of our
computational framework, we provided an algorithm for potential combination
of root isolation followed by Monte Carlo contour approximation.

The methods presented here can be utilized for finding the zeros of any
function via piecewise polynomial approximation, where the polynomial roots
approximate the roots of the original function. The function needs to
satisfy certain smoothness conditions in this case, e.g.,continuous. These
results are of great interest and will be presented elsewhere.

\section*{Declarations}

\begin{itemize}
\item I have no conflict of interest with the results in this paper.

\item This research received no funding.
\end{itemize}

\bibliographystyle{plainnat}
\bibliography{PolyRoots}

\end{document}